\newtheorem{thm}{Theorem}
\newtheorem{rem}{Remark}
\newtheorem{defn}{Definition}
\newcommand{\R}{\mathbf{R}}
\newcommand{\bE}{\mathbb{E}}
\newcommand{\KL}{{K\!L}}
\begin{document}

\title{Partitioning a macroscopic system into independent subsystems}

\author{Luigi Delle Site}
\address{Institute for Mathematics, Freie Universit\"{a}t Berlin, D-14195 Berlin, Germany}
\email{luigi.dellesite@fu-berlin.de}

\author{Giovanni Ciccotti}
\address{Instituto per le Applicazioni del Calcolo, Consiglio Nazionale delle Ricerche, Rome Italy; Universit\`a La Sapienza, Rome, Italy; and School of Physics, University College Dublin, Belfield, Dublin 4, Ireland}      
\email{giovanni.ciccotti@roma1.infn.it}

\author{Carsten Hartmann}
\address{Institute of Mathematics, Brandeburgische Technische Universit\"at (BTU) Cottbus-Senftenberg, D-03046 Cottbus, Germany}
\email{carsten.hartmann@b-tu.de}

\begin{abstract}
We discuss the problem of partitioning a macroscopic system into a collection of independent  subsystems. The partitioning of a system into replica-like subsystems is nowadays a subject of major interest in several field of theoretical and applied physics, and the thermodynamic approach currently favoured by practitioners is based on a phenomenological definition of an interface energy associated with the partition, due to a lack of easily computable expressions for a microscopic (i.e.~particle-based) interface energy. In this article, we outline a general approach to derive sharp and computable bounds for the interface free energy in terms of microscopic statistical quantities. We discuss potential applications in nanothermodynamics and outline possible future directions.
\end{abstract}

\keywords{Molecular dynamics, small systems thermodynamics, canonical ensemble, Bogoliubov inequality, Donsker-Varadhan variational principle}

\maketitle
\section{Introduction}
We consider a system of $N$ interacting particles contained in a volume $V$ at temperature $T$ where the interactions in a general phase (or state) space $\mathcal{X}\subset\R^d$ are characterised by a Hamiltonian $H$. Its canonical partition function 
\begin{equation}
Z(N,V,T)=\int_{V\subset\mathcal{X}}e^{-\beta H(x)}dx
\end{equation} 
encodes the statistical and thermodynamic properties of the system. In this work we discuss the possibility of partitioning a macroscopic system into weakly interacting subsystems. Generally, the partitioning of a macroscopic system into several independent or weakly interacting subsystems, in particular, subsystems of the same size (replicas) is relevant for the simulation of various kinds of thermodynamic systems \cite{ED05}. Molecular modelling and simulation, for example, is often concerned with the question of how to quantify the trade-off between available computational resources and the desired spatial resolution of a model; running molecular simulations that allow for reliable and robust predictions often requires suitable techniques for the extrapolation of simulation data and estimated statistical quantities on macroscopic scales (see e.g.~\cite{raff}). The extrapolation to larger scales requires the knowledge of the error of thermodynamic quantities invoked by the passage from a small system to a larger one, which often boils down to quantifying the error invoked by approximating a large system by independent subsystems, each of which has the size of the small system of interest. To solve the latter problem, it is necessary to determine the energy associated with creating interfaces between the subsystems. Roughly speaking, if such an interface energy is sizable, compared to the total energy of each subsystem, then the system is not separable and, as a consequence, its thermodynamic properties are not scalable, otherwise the system is separable and thus the extrapolation to a larger scale is straightforward.

Beyond molecular simulation, the idea of partitioning a system into independent subsystems is known to be a powerful theoretical tool for proving the existence of thermodynamic limits in particle systems \cite{ll1,ll2}. 
A related conceptual problem, which is relevant in various disciplines and applications, is the definition of the size of a nanosystem, specifically, the question of how small a nanosystem must be so that one can manipulate it at the level of single molecules or atoms, while being large enough so that thermodynamic and statistical properties are still well defined (see e.g.~\cite{temprl,nanothe,jarz17}).

The idea of building an ensemble of independent replicas whose joint statistics mimics a large system has been used by T.L.~Hill to lay the foundations for what is today known as ``nanothermodynamics'' \cite{hillpap,hillbook,nanother2}. A closely related framework has been proposed in connection with the thermodynamics of non-additive systems \cite{ruffo2,latella,latella2,ruffo1,mori0}. 

Nowadays, the partitioning of thermodynamic systems is mostly done following Hill's phenomenological approach that is based on empirical laws and {\em ad hoc} assumptions about the underlying thermodynamics. 
Despite its lack of theoretical foundation, Hill's approach represents a powerful tool for understanding small systems thermodynamics, and the aim of this work is to make a first step towards linking the key quantities in Hill's phenomenological approach, such as the free energy associated with the separation of a system into smaller units, with microscopic quantities that can be estimated from running computer simulations. In doing so, we start from the microscopic definition of the canonical partition function of the system and derive an upper and lower bound for the free energy difference between the fully coupled system and the collection of replicas with the coupling removed.  We derive tight bounds for the interface free energy in terms of particle-based expressions, where it turns out that the precise form of the removed coupling terms is irrelevant for the analysis, so the considerations equally apply for collections of uncoupled or weakly coupled replicas. The fact that the bounds are particle-based has important practical consequences, in that it allows for a straightforward numerical implementation by molecular simulation algorithms.

\section{Hill's approach to replica-like systems}

Before we explain our approach, we briefly review Hill's replica model \cite{hillpap,hillbook}. 
The approach proposed by T.L.~Hill has been often taken as a paradigm for the construction of a macroscopic ensemble from statistically independent microscopic replicas (see e.g.~\cite{ruffo1, latella2}). Hill's empirical approach is targeted at systems that are so small so that their interaction with the environment is no more negligible, as a consequence of which the known laws of thermodynamics do not hold. The solution proposed by Hill is to consider the thermodynamics of a single nanosystem in an extended ensemble composed of non-interacting replicas, which represents a macroscopic system that is governed by the standard laws of thermodynamics. Here ``non-interacting'' refers to the absence of an explicit interaction potential between the particles in different replicas, but, as we will see below, the construction gives rise to an effective interaction on the level of the thermodynamic potentials.

If a macroscopic system is characterized by energy $E_{r}$, entropy, $S_{r}$, volume, $V_{r}$, and number of particles, $N_{r}$, then such a macroscopic system can be represented as a collection of $r$ independent or weakly interacting replicas. Letting $E,S,V,N$ denote the corresponding quantities for each of the replicas, then $E_{r}=r E$, $S_{r}=r S$, $V_{r}=r V$ and $N_{r}=r N$, and the fundamental equation reads 
\begin{equation}
dE_{r}=TdS_{r}-PdV_{r}+\rho dN_{r}+\mathfrak{X} dr\,,
\label{hill}
\end{equation}
where $T$, $P$ and $\rho$ are the temperature, pressure and chemical potential that are well defined for the macroscopic system. The key point of Hill' approach now is to introduce the unknown $\mathfrak{X}$, which is the energy per replica associated with splitting the system into subsystems, which plays the role of an effective interaction between replicas. This energy is only empirically defined, and most models following Hill's prescription tacitly assume that each replica can be characterized by a microscopic distribution (e.g.~canonical) so that intensive thermodynamic quantities in (\ref{hill}), such as temperature or chemical potential are also microscopically well defined. 

Despite the success of this approach, a problem remains: in a system of interacting particles one wishes to have a particle-based definition of $\mathfrak{X}$, according to the fundamental laws of statistical mechanics. In most applications, however,   $\mathfrak{X}$ is defined only empirically rather than being based on first principles. A rigorous way to compute  $\mathfrak{X}$ would require to have access to the difference $\Delta F= F-F_{0}$ between the free energy $F$ of the total system and the total free energy $F_0=F_1+F_2+\ldots$ of the subsystems, so as to link the properties of the macroscopic system and its non-interacting constituents. 

In the next section we will derive microscopic expressions that bound $\Delta F$ from above and from below, and discuss conditions under which these bounds are sharp. We refrain from presenting our results with maximum possible generality, so as to keep the arguments clear and concise. We stress, however, that most of the  following reasoning can be easily generalized at the expense of introducing a more elaborate mathematical notation.

\section{Bounds for the interface free energy}
Let
\begin{equation}\label{H}
H = H_0 + U
\end{equation}
where $H_0,U\colon {\mathcal X}\to\R$, $\mathcal{X}\subset\R^d$ are continuous functions that are sufficiently growing at infinity (coercive) and bounded below. Here 
\begin{equation}
H_0=H_1+H_2+\ldots
\end{equation} 
has the interpretation of the total energy of a finite collection  of uncoupled (or weakly coupled) subsystems, and $U$ is the coupling energy.  We set $\beta=1$ in what follows, and assume, without loss of generality, that $H_0$ and $U$ are dimensionless, which can be easily obtained by scaling $(H_0,U)\mapsto(\beta H_0,\beta U)$. The specific form of $H_0$ or $U$ is irrelevant for the following considerations. Specifically, we seek upper and lower bounds for the free energy 
\begin{equation}\label{DeltaF}
\Delta F = - \log\frac{Z}{Z_0}\,,
\end{equation}
with 
\begin{equation}\label{partitionFtcs}
Z = \int_\mathcal{X} e^{-H(x)} dx\,,\quad Z_0 = \int_\mathcal{X} e^{-H_0(x)}dx
\end{equation}
We assume that both $Z$ and $Z_0$ are finite, so that $\Delta F$ is well-defined. 
It will be convenient to consider a dual representation of the thermodynamic free energy that is based on relative entropy or Kullback-Leibler divergence. 

\begin{defn}\label{defn:kl}
	Let $f,g\colon\mathcal{X}\to[0,\infty)$ be two probability density functions on ${\mathcal X}$. The Kullback-Leibler divergence (or: relative entropy) between $f$ and $g$ is defined as 
	\begin{equation}\label{kl}
	\KL(f,g) = \int_\mathcal{X}\log\frac{f(x)}{g(x)}f(x)\,dx
	\end{equation}
	provided that
	\begin{equation}
	\int_{\{g(x)=0\}} f(x)dx = 0\,,
	\end{equation}
	otherwise we set $\KL(f,g)=\infty$. 
\end{defn}

\begin{rem}
	The above definition is based on the convention $0\log 0 = 0$, which can be justified by a limit argument.
	The requirement that division by zero is only allowed on null sets, i.e. $\{g(x)>0\}\subset\mathcal{X}$ is a null set under the probability measure
	\begin{equation}
	P(A) = \int_A f(x)dx\,,\quad A\subset\mathcal{X} \textrm{ measurable}
	\end{equation}
	induced by the density function $f$, expresses the absolute continuity of the probability measure $P$ with respect to the measure $Q$ induced by $g$, so that $Q(B) > 0$ whenever $P(B)>0$ for a measurable set $B\subset\mathcal{X}$.  
	Absolute continuity guarantees that the integral (\ref{kl}) is well defined, possibly having the value $+\infty$. 
	
\end{rem}

 	Note that the Kullback-Leibler (KL) divergence is not symmetric in its arguments, i.e.~$\KL(f,g)\neq \KL(g,f)$. A useful property, however, is that it is non-negative, 
	\begin{equation}\label{nonnegKL}
	\KL(f,g)\ge 0\,,
	\end{equation}
	with equality if and only if $f(x)=g(x)$ for all $x\in\mathcal{X}$ up to a set $N\subset\mathcal{X}$ of measure zero under the probability density $f$. To see this, notice that, by Jensen's inequality, 
	\begin{equation}
	\begin{aligned}
	 \int_\mathcal{X}\log\frac{f(x)}{g(x)}f(x)\,dx & =  -\int_\mathcal{X}\log\frac{g(x)}{f(x)}f(x)\,dx\\
	  & \ge  -\log\int_\mathcal{X}\frac{g(x)}{f(x)}f(x)\,dx\\
	  & = -\log\int_\mathcal{X} g(x)dx\\
	  & = 0\,,
	\end{aligned}
	\end{equation}
	where the last identity is implied by $g$ being a probability density. As the logarithm is \emph{strictly} concave, equality in Jensens inequality is attained if and only if the integrand $\log(g/f)$ in the first equation is constant up to a set of measure zero under the probability density $f$. Since $f\ge 0$ cannot be identically zero, the integral must be equal to zero to attain equality, therefore $g=f$ up to null sets. 
	
	In the following we assume that the first argument in the KL divergence (here: $f$) is always strictly positive, which implies that null sets under $f$ are Lebesgue null sets, i.e., the above statements hold almost everywhere (a.e.). 

\subsection{Two-sided Bogoliubov inequality}

Finding an upper and a lower bound for the free energy difference $\Delta F=-\log(Z/Z_0)$ is now straightforward. To this end we define the two probability densities 
\begin{equation}\label{densities}
\pi = \frac{\exp(-H)}{Z}\,,\quad \pi_0 = \frac{\exp(-H_0)}{Z_0}
\end{equation}
on ${\mathcal X}$, 
which are readily seen to be normalized. 

\begin{thm}[Two-sided Bogoliubov inequality]\label{thm:2BI} Under the previous assumptions, it holds that 
	\begin{equation}\label{2BI}
	\bE_\pi[U] \le \Delta F \le  \bE_{\pi_0}[U]\,,
	\end{equation}
where 
\begin{equation}
\bE_{\pi}[U] = \frac{1}{Z}\int_\mathcal{X} U(x)\pi(x)\,dx\,,\quad \bE_{\pi_0}[U] = \frac{1}{Z_0}\int_\mathcal{X} U(x)\pi_0(x)\,dx
\end{equation}
denote the expectations with respect to $\pi$ and $\pi_0$.
\end{thm}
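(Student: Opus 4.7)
The plan is to derive both inequalities in (\ref{2BI}) as immediate consequences of the non-negativity (\ref{nonnegKL}) of the Kullback--Leibler divergence, applied to the pair $(\pi,\pi_0)$ in (\ref{densities}) once in each order. The reason this should work cleanly is that the decomposition (\ref{H}) gives $H-H_0=U$, so the log--density ratio between $\pi$ and $\pi_0$ is, up to an additive constant tied to $\Delta F$, simply $\pm U$; integrating that log--ratio against either density therefore yields an \emph{equality} linking $\Delta F$ to a single expectation of $U$, and the desired inequalities drop out by dropping one non-negative term.

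Concretely, I would first use the explicit forms (\ref{densities}) together with $H=H_0+U$ to evaluate
\begin{equation*}
\log\frac{\pi_0(x)}{\pi(x)} \;=\; U(x) - \Delta F\,,\qquad \log\frac{\pi(x)}{\pi_0(x)} \;=\; \Delta F - U(x)\,,
\end{equation*}
where I have used that $\log(Z/Z_0)=-\Delta F$ by (\ref{DeltaF}). Integrating the first identity against $\pi_0$ and the second against $\pi$, and recalling that both densities are normalised, produces the two Gibbs--Bogoliubov-type identities
\begin{equation*}
\KL(\pi_0,\pi) \;=\; \bE_{\pi_0}[U] - \Delta F\,,\qquad \KL(\pi,\pi_0) \;=\; \Delta F - \bE_{\pi}[U]\,.
\end{equation*}
Applying the non-negativity $\KL\ge 0$ to each of these equalities immediately gives the upper bound $\Delta F\le\bE_{\pi_0}[U]$ and the lower bound $\bE_\pi[U]\le\Delta F$, which together constitute (\ref{2BI}).

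The only point that needs a brief comment rather than real work is well-definedness, and this is where I would expect any minor friction to live. Since $H_0$ and $H$ are continuous and bounded below with $Z,Z_0<\infty$ by assumption, the densities $\pi$ and $\pi_0$ are strictly positive throughout $\mathcal{X}$, hence mutually absolutely continuous, so the conditions of Definition~\ref{defn:kl} are satisfied in both orderings and each $\KL$ divergence lies in $[0,\infty]$. If either expectation $\bE_\pi[U]$ or $\bE_{\pi_0}[U]$ should fail to be finite, the corresponding bound in (\ref{2BI}) is trivially true (the $\KL$ divergence is then $+\infty$ and the identity degenerates consistently); in the finite case the rearrangement above is valid verbatim. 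I do not foresee a deeper obstacle, since the theorem is essentially a rearrangement of two standard relative-entropy identities between canonical measures.
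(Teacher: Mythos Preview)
Your proposal is correct and follows essentially the same route as the paper's own proof: both compute the two relative entropies $\KL(\pi,\pi_0)=\Delta F-\bE_\pi[U]$ and $\KL(\pi_0,\pi)=\bE_{\pi_0}[U]-\Delta F$ from the explicit densities and then invoke the non-negativity (\ref{nonnegKL}) to obtain the two bounds. Your additional remarks on the log-density ratio and on the degenerate (non-integrable) case are slightly more explicit than what the paper writes, but the argument is the same.
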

\begin{proof}
	The conditions on $H_0$ and $U$ imply that $U$ is integrable with respect $\pi_0$ and, as a consequence, it is integrable with respect to $\pi$ as well. Moreover both $\pi_0$ and $\pi$ are strictly positive. The non-negativity of the KL divergence (see Definition \ref{defn:kl}) then implies that 
	\begin{equation}
	0 \le \KL(\pi,\pi_0) = -\log\frac{Z}{Z_0} - \bE_\pi[U]\,,
	\end{equation}
	in other words,
	\begin{equation}
	\Delta F \ge \bE_\pi[U]\,.
	\end{equation}
	Interchanging the arguments $\pi$ and $\pi_0$ in the KL divergence, we see that 
	\begin{equation}
	0 \le KL(\pi_0,\pi) = \log\frac{Z}{Z_0} + \bE_{\pi_0}[U]\,,
	\end{equation}
	which gives us the upper bound
	\begin{equation}
	\Delta F \le \bE_{\pi_0}[U]\,.
	\end{equation}
	Hence the assertion is proved. 
\end{proof}

The upper bound $\Delta F \le \bE_{\pi_0}[U]$ of Theorem \ref{thm:2BI}, that follows directly from (\ref{DeltaF}) via Jensen's inequality, is known as Bogoliubov inequality (also: Peierls-Bogoliubov or Gibbs-Bogoliubov inequality) in the literature which is why we named the result of Theorem \ref{thm:2BI} after Bogoliubov; see \cite{peier,symanzik,ll1,ll2}. We mention a related result  by Isihara \cite{GBdagger} who has proved a two-sided Bogoliubov inequality using a cumulant expansion of the free energy. 

\subsection{Variational bounds}

By Jensen's inequality, the bounds in Theorem \ref{thm:2BI} are sharp if and only if $U$ is constant, in which case $\Delta F=U$. Excluding this somewhat pathological case, and assuming further that $U$ is bounded, we can still improve the bounds for $\Delta F$ by using the duality relation \cite{DMR96,Hetal14,HS12}
\begin{equation}\label{DV}
-\log \bE_{\pi_0}[e^{-U}] = \inf_{\rho\in\mathcal{P}_+} \left\{\bE_{\rho}[U] + \KL(\rho,\pi_0)\right\}\,,
\end{equation} 
where, by definition, $-\log \bE_{\pi_0}[e^{-U}]=\Delta F$
and the infimum is taken over the set $\mathcal{P}_+$ of a.e.~positive probability densities $\rho> 0$ on $\mathcal{X}$. Equation (\ref{DV}) is known as the Donsker-Varadhan principle \cite{DS}, which, in our case, is again a simple consequence of Jensen's inequality: Noting that, for any probability density $\rho>0$ (a.e.) and any bounded random variable $W$ on ${\mathcal X}$, we have 
\begin{equation}
\bE_{\pi_0}[W]  = \bE_{\rho}[WL]
\end{equation}
where the weight $L(x)=\pi_0(x)/\rho(x)$, $\rho(x)\neq 0$ is the likelihood ratio between the two probability densities $\pi_0$ and $\rho$, and 
\begin{equation}
\bE_{\rho}[W] = \int_{\mathcal{X}}W(x)\rho(x)dx
\end{equation}
denotes the expectation of $W$ with respect to $\rho$. (Without loss of generality, we can exclude the set $\{\rho(x)=0\}\subset{\mathcal X}$ from our considerations, which is a null set under both $\rho$ and $\pi_0$.) Hence, by Jensen's inequality, it follows that 
\begin{equation}\label{auxeq1}
-\log \bE_{\pi_0}[e^{-U}] 
 = -\log \bE_{\rho}\!\left[e^{-U+\log L}\right] \le \bE_{\rho}\!\left[U-\log L\right]
\end{equation}
The rightmost expression equals $\bE_{\rho}[U]+\KL(\rho,\pi_0)$.
By the strict convexity of the exponential function in (\ref{auxeq1}), equality can only be attained if $U(x)+\log(\pi_0(x)/\rho(x))$ is a.e.~constant, since otherwise the inequality becomes strict. Moreover, since the right hand side in (\ref{auxeq1}) is strictly convex in $\rho$, there is a unique probability density function $\rho=\rho^*$ for which the equality is attained. It is given by 
\begin{equation}
\rho^*(x) = \exp(\gamma-U(x))\pi_0(x)\,,
\end{equation}
and it is therefore the unique, positive minimizer of (\ref{DV}). Here $\gamma>0$ is the normalization constant, and it can be readily verified that $\gamma$ is equal to $\Delta F$ which implies that $\rho^*=\pi$ and thus yields the known equality
\begin{equation}\label{infDV}
\Delta F = \bE_{\pi}[U] + \KL(\pi,\pi_0)\,
\end{equation}
from the first step in the proof of Theorem \ref{thm:2BI}.

In practice, the rightmost term in (\ref{infDV}) will be difficult to evaluate, however we can get a good estimate by exploiting the fact that the Donsker-Varadhan principle (\ref{DV}) has a dual form that characterises the (relative) entropy in terms of the free energy and the internal energy (see \cite[Prop.~2.3]{DMR96}):  
\begin{equation}\label{dualDV}
\KL(\pi,\pi_0) = \sup_{\psi\in L^1}\left\{\bE_{\pi}[\psi] - \log\bE_{\pi_0}[e^\psi]\right\}\,, 
\end{equation}
where integrability $\psi\in L^1$ is understood with respect to $\pi$. 
By combining the equations (\ref{DV})--(\ref{dualDV}), we obtain the following variational characterisation of $\Delta F$. 
\begin{thm}[Variational bounds]\label{thm:varBds}
	We have 
	\begin{equation}\label{2BItight}
\sup_{\psi\in L^1} \left\{\bE_{\pi}[U+\psi] - \log\bE_{\pi_0}[e^\psi]\right\} = \Delta F =  \inf_{\rho\in\mathcal{P}_+} \left\{\bE_{\rho}[U] + \KL(\rho,\pi_0)\right\}\,.
	\end{equation}
That is, for all $\psi\in L^1$ and $\rho\in\mathcal{P}_+$ it holds that 
\begin{equation}\label{2BI2}
\bE_{\pi}[U+\psi] - \log\bE_{\pi_0}[e^\psi] \le \Delta F  \le \bE_{\rho}[U] + \KL(\rho,\pi_0)\,.
\end{equation}
\end{thm}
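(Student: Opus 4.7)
The entire statement is essentially a recombination of identities already assembled in the preceding discussion, so the plan is to make the assembly explicit. First, observe that the free energy has the exponential-moment representation
\begin{equation*}
\Delta F = -\log\frac{Z}{Z_0} = -\log\int_{\mathcal X} e^{-U(x)}\frac{e^{-H_0(x)}}{Z_0}\,dx = -\log\bE_{\pi_0}[e^{-U}],
\end{equation*}
obtained by factoring $e^{-H}=e^{-H_0}e^{-U}$ in the numerator of $Z$. This identification places $\Delta F$ in the form required by both the primal and dual Donsker--Varadhan representations.

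For the right-hand equality in (\ref{2BItight}), I would simply invoke the Donsker--Varadhan principle (\ref{DV}), which was already derived in the paragraph following that equation from Jensen's inequality together with the change of measure $\bE_{\pi_0}[W]=\bE_\rho[WL]$ with likelihood ratio $L=\pi_0/\rho$. Substituting $\Delta F=-\log\bE_{\pi_0}[e^{-U}]$ into the left-hand side of (\ref{DV}) gives the infimum representation. The strictly convex analysis in the paragraph preceding the theorem also identifies the unique minimiser $\rho^*=\pi$, so the infimum is actually attained.

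For the left-hand equality, I would combine (\ref{infDV}) with the dual Donsker--Varadhan representation (\ref{dualDV}). Since $\bE_\pi[U]$ does not depend on $\psi$, one can write
\begin{equation*}
\sup_{\psi\in L^1}\bigl\{\bE_\pi[U+\psi] - \log\bE_{\pi_0}[e^\psi]\bigr\} = \bE_\pi[U] + \sup_{\psi\in L^1}\bigl\{\bE_\pi[\psi] - \log\bE_{\pi_0}[e^\psi]\bigr\} = \bE_\pi[U] + \KL(\pi,\pi_0),
\end{equation*}
and the right-hand side equals $\Delta F$ by (\ref{infDV}). The supremum is attained at $\psi^*=-U+\Delta F$ (equivalently $\psi^*=\log(\pi/\pi_0)$ up to an additive constant), as can be checked directly.

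Finally, the pointwise inequalities (\ref{2BI2}) follow immediately: any admissible $\rho\in\mathcal{P}_+$ gives an upper bound on $\Delta F$ because $\Delta F$ is the infimum, and any admissible $\psi\in L^1$ gives a lower bound because $\Delta F$ is the supremum. The main obstacle, if any, is not in the assembly but in justifying the dual identity (\ref{dualDV}); since it is simply cited from \cite{DMR96}, I would either cite it and move on or sketch a short proof in parallel to the argument already given for (\ref{DV}), noting that the natural optimiser $\psi=\log(\pi/\pi_0)$ lies in $L^1(\pi)$ by finiteness of $\KL(\pi,\pi_0)$, which is itself a consequence of the coercivity and boundedness hypotheses on $H_0$ and $U$.
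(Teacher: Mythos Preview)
Your proposal is correct and follows essentially the same route as the paper: the paper simply states that the theorem is obtained ``by combining the equations (\ref{DV})--(\ref{dualDV})'', and you have made that combination explicit, including the key step of pulling $\bE_\pi[U]$ out of the supremum and using (\ref{infDV}). Your identification of the optimisers $\rho^*=\pi$ and $\psi^*=-U+\Delta F$ and your remark on the attainment via finiteness of $\KL(\pi,\pi_0)$ are welcome additions that the paper leaves implicit.
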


Even though the variational form (\ref{2BItight}) may not be particularly useful in practice, (\ref{2BI2}) can be the starting point for systematic improvements of the Bogoliubov bound  (\ref{2BI}). In particular (\ref{2BI}) is obtained from  (\ref{2BI2}) by setting $\psi=0$ and $\rho=\pi_0$, and so a sensible strategy to improve (\ref{2BI}) could be to consider the homotopies
\begin{equation}
\psi_{\alpha_1} = -\alpha_1 U\,,\quad 0\le \alpha_1\le 1\,,
\end{equation}
and 
\begin{equation}
\rho_{\alpha_2} = \exp(\gamma(\alpha_2) - \alpha_2 U)\pi_0\,,\quad 0\le \alpha_2\le 1\,,
\end{equation}
in (\ref{2BI2}) that have the property that one recovers the original bound (\ref{2BI}) for $\alpha_1=\alpha_2=0$ and the sharp bound in (\ref{2BItight}) for $\alpha_1=\alpha_2=1$. Here $\gamma(\alpha_2)$ is again the normalization constant that guarantees that $\rho_{\alpha_2}$ is a probability density.

By differentiating the variational bounds with respect to the homotopy parameters $\alpha_1$ and $\alpha_2$, it can be readily seen that the lower bound in (\ref{2BI2}) is strictly increasing for small values of $\alpha_1$, whereas the upper bound is strictly decreasing for small values of $\alpha_2$, in other words, small perturbations to $\psi=0$ and $\rho=\pi_0$ are sufficient to improve the bounds. 

\begin{rem}
	
	Equation (\ref{DV}) furnishes the Gibbs relation $F=E-TS$ for the
	thermodynamic free energy $F$, with $E$ being the internal energy, $T$ temperature and $S$ denoting the Gibbs entropy. We specify the previous assumptions by assuming that ${\mathcal X}\subset\R^d$ is compact and setting $H_0\equiv 0$, so that $\pi_0\equiv 1/|{\mathcal X}|$. Then, setting $U=\beta V$ with $\beta=(k_B T)^{-1}$, $k_B>0$ being Boltzmann's constant and $V$ denoting any continuous energy function on $\mathcal{X}$, equation (\ref{DV}) turns into 
	\begin{equation}
	\underbrace{-\beta^{-1} \log \int \exp(-\beta V)\,dx}_{=F} = \min_{\rho \in\mathcal{P}_+}\left\{\underbrace{\int V \rho \,dx}_{=E} +  \underbrace{\beta^{-1}\int\rho\log\rho\,dx}_{=-TS} \right\}\,.
	\end{equation}
	The unique minimizer is the Gibbs-Boltzmann density
	$\rho^*=\exp(-\beta V)/Z$ with normalization constant $Z=\exp(-\beta F)$. 
	Hence the Donsker-Varadhan principle can be considered a measure-theoretic generalization of the Gibbs relation for the free energy in terms of internal energy and (relative) entropy. 
\end{rem}

\subsection{Examples and computational aspects}

As a simple example, we consider a collection of two subsystems with separable Hamiltonian
\begin{equation}
H_0\colon\mathcal{X}_1\times\mathcal{X}_2\to\R\,,\; (x_1,x_2)\mapsto H_1(x_1)+H_2(x_2)
\end{equation}
with $H_i\colon\mathcal{X}_i\to\R$ and $\mathcal{X}_i\subset\R^{6N_i}$, $i=1,2$. The coupling is by virtue of the interaction Hamiltonian
\begin{equation}
U\colon\mathcal{X}_1\times\mathcal{X}_2\to\R\,,\; (x_1,x_2)\mapsto U(x_1,x_2)\,.
\end{equation}
Letting 
\begin{equation}
Z_i(N_i,V_i,T) = \int_{V_i\subset\mathcal{X}_i} e^{-\beta H_i(x_i)}dx_i\,,\; i=1,2\,,
\end{equation}
and 
\begin{equation}
Z(N,V,T) = \int_{V\subset\mathcal{X}} e^{-\beta H(x)}dx\,,
\end{equation}
with $\beta=1/(k_BT)$ be the canonical partition functions, where we remind the reader of the convention $H=H_0+U$ with $H_0=H_1+H_2$, the free energy difference is defined as 
\begin{equation}
\Delta F = F - F_1 - F_2 =  -\beta^{-1}\log\left(\frac{Z}{Z_1Z_2}\right).
\end{equation}
Here  $N=N_{1}+N_{2}$, $|V|=|V_{1}|+|V_{2}|$ are the total number of particles and the total volume, with constant density 
\begin{equation}
\varrho = \frac{N}{|V|}=\frac{N_{1}}{|V_{1}|}=\frac{N_{2}}{|V_{2}|}>0\,.
\end{equation}
Theorem \ref{thm:2BI} now implies that 
\begin{equation}
\bar{U} \le \Delta F\le \bar{U}_0
\end{equation} 
where 
\begin{equation}
\bar{U}= \frac{1}{Z}\int_{V}U(x)e^{-\beta H(x)}dx\,
\end{equation}
and
\begin{equation}
\bar{U}_0= \frac{1}{Z_0}\int_{V}U(x)e^{-\beta H_0(x)}dx\,,\quad Z_0=Z_1Z_2\,,
\end{equation}
denote the expected values of the interaction potential between the particles of subsystem 1 and subsystem 2 with (lower bound) or without coupling (upper bound).
Since the two subsystems are independent, $\bar{U}_{0}$ has the interpretation of an average interdomain energy: the energy required to partition the system into two subsystems plus a ``{\it corridor}'', assuming a phase space volume $\delta V$ that is negligible in comparison with $V_{1}$ and $V_{2}$, so that $|V|\approx |V_1|+|V_2|$.\\

Other examples that are relevant in applications are the calculation of thermodynamic bulk properties or the simulation of open systems \cite{njp,prejing}. Imagine, for example, a situation in which one wants to study the conformational change of a large biomolecule immersed in aqueous solution. Typically one would need a large system to properly represent the bulk of the solvent, however such calculations are very expensive, and one may think of separating the system into a solvation region and an environement, and simulating only the relevant subsystem consisting of the biomolecule and its solvation shell. 
If one were able to simulate only the subsystem while ignoring the interaction with the (possibly infinitely large) solvent environment, it would be possible to compute the relevant properties by simulating a system of a drastically reduced dimensionality. In order to quantify finite-size effects that are invoked by neglecting the interactions between the subsystem and its environment (mediated by the interaction potential $U$), it is crucial to have precise estimates of the interface free energy associated with the solvation shell.

Roughly speaking, finite-size effects are negligible, and computing bulk properties by simulating a small system is justified, when $\Delta F$ is small compared to the characteristic energy scale of the system, otherwise the effects coming from the environment have to be taken into account, for example by increasing the system size or by changing the interaction potential \cite{pawar}. As these free energy calculations are extremely costly and the simulations have to be repeated many times with different simulation parameters (e.g.~see \cite{noy}) in order to determine the optimal system size or the parameters of the interaction potential, it is important to have free energy estimators that are computationally feasible.

Let us stress that the aforementioned examples are just representatives for a much larger class of problems, for most molecular dynamics models that involve either heat baths, reservoirs, or the alike are subject to finite size effects when implemented on a computer. Assessing whether these finite size effects are important or not is crucial, and in this paper we suggest a flexible tool to derive quantitative estimates to see how far a simulation is away from the (ideal) thermodynamic limit, without running an enormous number of trial simulations.

Being able to split a large system with $d=3N$ degrees of freedom into many, say, $r$ independent subsystems moreover allows for easy parallelizability of MD codes \cite{griebel}. This typically reduces the computational complexity of the corresponding algorithms from $\mathcal{O}(d^\alpha)$, with $\alpha$ typically between 2 and 3, to $\mathcal{O}(d_1^\alpha+\ldots +d_r^\alpha)$ with $N_k=d_k/3$ being the number of particles in the $k$-th subsystem, even though one is simulating in total the same number of particles $N=N_1+\ldots+N_r$.

\begin{rem}
An often neglected aspect in free energy calculations concerns the corresponding statistical estimators. Because the free energy is of the form of a cumulant generating function $\log \bE[\exp(\ldots)]$, standard Monte Carlo estimators may suffer from both a large variance (because of the exponential function) and a systematic bias (because of the logarithm); see e.g.~\cite[Sec.~2.3--2.4]{FEbook}. One resort is to replace the Monte Carlo estimator by its variational counterpart from Theorem \ref{thm:varBds} and use stochastic optimization tools to approximate the free energy as has been described in  \cite{HS12,ZhangEtal14}, or to use suitable moment estimates (``concentration inequalities'') for the underlying probability distributions as in e.g.~\cite{bousquet,watbled}.
\end{rem}

\section{Discussion and Conclusions}

Mimicking infinite physical systems by simulating finite domains is challenging because the statistics of the small system is by definition ``marginal'', in that one considers finitely many degrees of freedom of a system with, theoretically, infinitely many degrees of freedom. Therefore one needs suitable numerical techniques to account for the finite size effects in the corresponding probability distributions \cite{previctor}.
Particularly important are error indicators for interface free energies that allow for estimating the statistical error due to the separation of an infinite system into weakly interacting small subsystems. The difference between the upper and lower bounds as well as their variational counterparts provide easily computable error bounds to quantify this effect.

Getting precise estimates of the interface free energy $\Delta F$ is of interest in many practical problems, in which the separation of a system into non-interacting or weakly interacting subsystems plays a role \cite{njp,prejing}; examples include the formation of islands in colloidal systems \cite{deb} or the formation of nanosize domains through epitaxial growth \cite{germsilic}, to mention just two important cases; specifically, in colloidal systems, the knowledge of the degree of independence of the various islands in terms of the ratio between the interface energy and the total energy of the islands gives a handy criterion as to whether it is justified to treat single islands as systems per se in a simulation and thus acquire a large gain in computational efficiency in analyzing the local microscopic details.
Given the current development of nanothermodynamics in the direction of (bio-)technology, precise and computationally feasible free energy bounds may be further relevant for the statistical modeling of such nanosystems \cite{nanoeng,hillnanotech,nanonature,jcpvirus}. The variational free energy bounds can moreover serve as quantitative criteria to optimally design separable systems with specific degrees of separation by suitably adjusting the characteristic physical parameters (e.g.~interaction range, strength of interaction, atom spacing). One such  example are molecular wires in nanoelectronic devices, for which a high degree of separability implies that the independent subsystems are representative for locally acting devices, where the degree of locality can be engineered by  a proper choice of the physical parameters controlled by the minimization of the interface energy \cite{molwir}. We stress one more time that the above considerations can be readily generalized to non-stationary and path-space problems (cf.~\cite{DKPP16}). Future work should address bounds for specific observables. 
%

\section*{Acknowledgments}
 This work was supported by the Deutsche Forschungsgemeinschaft (DFG) under the grant CRC 1114 (projects A05 and C01) and by the European Community through the project E-CAM. 


\begin{thebibliography}{24}
\bibitem{ED05}
D.J.~Earl and M.W.~Deem. Parallel tempering: theory, applications, and new perspectives. {\em Phys.~Chem.~Chem.~Phys.} {\bf 7}, 3910 (2005)
\bibitem{raff}
R.~Cortes-Huerto, K.~Kremer and R.~Potestio. Kirkwood-Buff integrals in the thermodynamic limit from small-sized molecular dynamics simulations. {\em J.~Chem.~Phys.} {\bf 145}, 141103 (2016)
\bibitem{ll1}
J.L.~Lebowitz and E.H.~Lieb. Existence of Thermodynamics for Real Matter with Coulomb Forces. {\em Phys.~Rev.~Lett.} {\bf 22}, 631 (1969)
\bibitem{ll2}
E.H.~Lieb and J.L.~Lebowitz. The constitution of matter: Existence of thermodynamics for systems composed of electrons and nuclei. {\em Adv.~Math.} {\bf 9}, 316 (1972)
\bibitem{temprl}
M.~Hartmann, G.~Mahler and O.~Hess. Existence of Temperature on the Nanoscale. {\em Phys.~Rev.~Lett.}  {\bf 93}, 080402 (2004)
\bibitem{nanothe}
J.~Ro{\ss}nagel, S.T.~Dawkins, K.N.~Tolazzi, O.~Abah, E.~Lutz, F.~Schmidt-Kaler and K.~Singer. A single-atom heat engine {\em Science} {\bf 352}, 325 (2016).
\bibitem{jarz17}
C.~Jarzysnki. Stochastic and macroscopic thermodynamics of strongly coupled systems. {\em Phys.~Rev.~X} {\bf 7}, 011008 (2017)
\bibitem{hillpap}
T.L.~Hill. Thermodynamics of small systems. {\em J.~Chem.~Phys.} {\bf 36}, 3182 (1962)
\bibitem{hillbook}
T.L.~Hill, {\it Thermodynamics of Small Systems}, Dover (1994)
\bibitem{nanother2}
T.L.~Hill and R.V.~Chamberlin. Extension of the thermodynamics of small systems to open metastable states: An example. {\em Proc.~Natl.~Acad.~Sci. USA} {\bf 95}, 12779 (1998)
\bibitem{ruffo2}
A.~Campa, T.~Dauxois, and S.~Ruffo. Statistical mechanics and dynamics of solvable models with long-range interactions. {\em Phys.~Rep.} {\bf 480} 57 (2009)
\bibitem{latella}
I.~Latella and A.~Perez-Madrid. Local thermodynamics and the generalized Gibbs-Duhem equation in systems with long-range interactions. {\em Phys.~Rev. E} {\bf 88}, 042135 (2013)
\bibitem{latella2}
I.~Latella, A.~Perez-Madrid, A.~Campa, L.~Casetti, and S.~Ruffo. Long-range interacting systems in the unconstrained ensemble. {\em Phys.~Rev. E} {\bf 95}, 012140 (2017)
\bibitem{ruffo1}
I.~Latella, A.~Perez-Madrid, A.~Campa, L.~Casetti and S.~Ruffo. Thermodynamics of nonadditive systems. {\em Phys.~Rev.~Lett.} {\bf 114}, 230601 (2015)
\bibitem{mori0}
T.~Mori. Thermodynamics of extensive but nonadditive systems: modified Gibbs-Duhem equation in the dipolar gas. {\em arXiv:1510.07109v1} (2015)
\bibitem{GBdagger}
A.~Isihara. The Gibbs-Bogoliubov inequality. {\em J.~Phys. A (Proc. Roy. Soc.)} {\bf 1}(5), 539 (1968)
\bibitem{peier}
R.E.~Peierls. On a minimum property of the free energy. {\em Phys.~Rev.} {\bf 54}, 918 (1938)
\bibitem{symanzik}
K.~Symanzik. Proof and refinements of an inequality of Feynman. {\em J.~Math.~Phys.} {\bf 6}, 1155 (1965) 
\bibitem {DMR96} 
P.~Dai Pra., L.~Meneghini, and W.J.~Runggaldier. Connections between stochastic control and dynamic games. {\em Math.~Control Signals Systems} {\bf 9}, 303 (1996)
\bibitem{Hetal14}
C. Hartmann, R.~Banisch, M.~Sarich, T.~Badowski, and C.~Sch\"utte. Characterization of rare events in molecular dynamics. {\em Entropy} {\bf 16}, 350 (2014)
\bibitem {HS12}  
C.~Hartmann and C.~Sch\"{u}tte. Efficient rare event simulation by optimal nonequilibrium forcing. {\em J.~Stat.~Mech.~Theor.~Exp.} {\bf 2012}, P11004 (2012)
\bibitem{DS}
J.D.~Deuschel and D.W.~Stroock, {\it Large Deviations}, Academic Press, New York (1989)
\bibitem{previctor}
V.~Herdeiro and B.~Doyon. Monte Carlo method for critical systems in infinite volume: The planar Ising model. {\em Phys.~Rev. E} {\bf 94}, 043322 (2016)
\bibitem{pawar}
A.~Pawar and G.~Favrin. Finite Size Effects in Simulations of Protein Aggregation. {\em 
PLoS ONE} {\bf 3}, e2641 (2008)
\bibitem{noy}
A.~Noy, A.~P{\'e}rez, C.A.~Laughton, and M.~Orozco. Theoretical study of large conformational transitions in DNA: the B↔A conformational change in water and ethanol/water. {\em Nucleic Acids Res.} {\bf 35}, 3330 (2007) 
\bibitem{griebel}
M.~Griebel, S.~Knapek, and G.~Zumbusch. {\it Numerical Simulation in Molecular Dynamics: Numerics, Algorithms, Parallelization, Applications}. Springer, Berlin (2007)
\bibitem{ZhangEtal14}
W.~Zhang, H.~Wang, C.~Hartmann, M.~Weber, and C.~Sch\"utte. Applications of the cross-entropy method to importance sampling and optimal control of diffusions. {\em SIAM J.~Sci.~Comput.} {\bf 36}, A2654 (2014)
\bibitem{FEbook} 
T.~Leli{\`e}vre, M.~Rousset, and G.~Stoltz. {\em Free Energy Computations: A Mathematical Perspective}, Imperial College Press, London (2010)
\bibitem{bousquet}
O.~Bousquet. A Bennett concentration inequality and its application to suprema of empirical processes. {\em C.~R.~Acad.~Sci.~Paris, Ser.~I} {\bf 334}, 495  (2002)
\bibitem{watbled}
Q.~Liu and F.~Watbled. Exponential inequalities for martingales and asymptotic properties of the free energy of directed polymers in a random environment. {\em Stoch. Proc. Appl.} {\bf 119}, 3101 (2009)
\bibitem{njp}
A.~Agarwal, J.~Zhu, C.~Hartmann, H.~Wang, and L.~Delle Site. Molecular dynamics in a grand ensemble: Bergmann–Lebowitz model and adaptive resolution simulation. {\em New J.~Phys.} {\bf 17}, 083042 (2015)
\bibitem{prejing}
J.~Zhu, R.~Klein, and L.~Delle Site. Adaptive molecular resolution approach in Hamiltonian form: An asymptotic analysis. {\em Phys.~Rev.E} {\bf 94}, 043321 (2016)
\bibitem{deb}
D.~Deb and H.H.~Gr\"{u}nberg. Colloidal model system for island formation. J.~Phys: Cond.~Matt. {\bf 21}, 245102 (2009)
\bibitem{germsilic}
R.~Hull and J.C.~Bean, {\it Germanium Silicon: Physics and Materials}, Academic Press, San Diego (1999)
\bibitem{nanoeng}
R.V.~Chamberlin. The big world of nanothermodynamics. {\em Entropy} {\bf 17}, 52 (2015)
\bibitem{hillnanotech}
U.~Lucia. A Link between Nano- and Classical Thermodynamics: Dissipation Analysis (The Entropy Generation Approach in Nano-Thermodynamics). {\em Entropy} {\bf 17}, 1309 (2015)
\bibitem{nanonature}
C.~Bustamante, J.~Liphardt, and F.~Ritort. The Nonequilibrium Thermodynamics of Small Systems. {\em Physics Today} {\bf 58}, 43 (2005)
\bibitem{jcpvirus}
Y.~Miao and P.~Ortoleva. All-atom multiscaling and new ensembles for dynamical nanoparticles. {\em J.~Chem.~Phys.} {\bf 125}, 044901 (2006)
\bibitem{molwir}
I-W.P.~Chen, M.-D.~Fu, W.-H.~Tseng, J.-Y.~Yu, S.-H.~Wu, C.-J.~Ku, C.-H.~Chen, and S.-M.~Peng. Conductance and Stochastic Switching of Ligand-Supported Linear Chains of Metal Atoms. {\em  Angew.~Chem.~Int.~Ed.} {\bf 45}, 5814 (2006)
\bibitem{DKPP16}
P.~Dupuis, M.A.~Katsoulakis, Y.~Pantazis, and P.~Plech{\'a}{\v c}. 	
Path-Space Information Bounds for Uncertainty Quantification and Sensitivity Analysis of Stochastic Dynamics. {\em SIAM/ASA J. Uncertainty Quantification} {\bf 4}, 80 (2016)
\end{thebibliography}
\end{document}